\newcommand{\mnorm}[1]{{\left\vert\kern-0.25ex\left\vert\kern-0.25ex\left\vert #1 
    \right\vert\kern-0.25ex\right\vert\kern-0.25ex\right\vert}}
\newtheorem{definition}{Definition} 
\newtheorem{lemma}{Lemma}
\newtheorem{assumption}{Assumption}
\newtheorem{problem}{Problem}
\begin{document}

\title{Allocating Corrective Control to Mitigate Multi-agent Safety Violations Under Private Preferences}

\author{
Johnathan Corbin\textsuperscript{1}, 
Sarah H.Q. Li\textsuperscript{1}, 
Jonathan Rogers\textsuperscript{1}%
\thanks{
\textsuperscript{1} All authors are with the Daniel Guggenheim School of Aerospace Engineering, Georgia Institute of Technology, Atlanta, GA 30332, USA. Emails: \texttt{jcorbin33@gatech.edu}, \texttt{sarahli@gatech.edu}, \texttt{jonathan.rogers@ae.gatech.edu}.
}%
}

\maketitle

\maketitle
\thispagestyle{firstpage} 
\pagestyle{empty}         


\begin{abstract}
We propose a novel framework that computes the corrective control efforts to ensure joint safety in multi-agent dynamical systems. This framework efficiently distributes the required corrective effort without revealing  individual agents' private preferences. Our framework integrates high-order control barrier functions (HOCBFs), which enforce safety constraints with formal guarantees of safety for complex dynamical systems, with a privacy-preserving resource allocation mechanism based on the progressive second price (PSP) auction. When a joint safety constraint is violated, agents iteratively bid on new corrective efforts via `avoidance credits' rather than explicitly solving for feasible corrective efforts that remove the safety violation. The resulting correction, determined via a second price payment rule, coincides with the socially optimal safe distribution of corrective actions. Critically, the bidding process achieves this optimal allocation efficiently and without revealing private preferences of individual agents. We demonstrate this method through multi-robot hardware experiments on the Robotarium platform.
\end{abstract}
 
\section{Introduction}\label{sec:intro}
Ensuring the safe and reliable operation of autonomous multi-agent systems is a critical challenge in fields like transportation, logistics, and defense, especially when agents have distinct objectives. Inter-agent safety, which can range from phenomena like maintaining inter-agent separation distances \cite{wang_safety_2017} to ensuring communication quality \cite{de_carli_distributed_2024}, is crucial for operational success of multi-agent systems. A key problem is how to distribute the responsibility for safety maneuvers. While existing methods can guarantee safety, they often impose a disproportionate control effort on certain agents or neglect the individual preferences of agents. Over time, this imbalance can lead to system inefficiencies or strategic behavior from agents.

This work addresses this challenge by introducing a coordination framework that guarantees safety, achieves efficient control allocations, and preserves the privacy of agents' preferences. Our approach integrates high-order control barrier functions (HOCBFs) for provable safety guarantees \cite{borrmann_control_2015, wang_safety_2017, ames_control_2014, xiao_high-order_2022} with an auction mechanism for control allocation \cite{lazar2001design, ma_auction_2020}. Unlike methods that pre-allocate regions of the state space \cite{leet_combinatorial_2024}, our framework allocates the real-time corrective control effort required to satisfy the HOCBF, bridging the gap between HOCBFs and auctions. 

While HOCBFs produce safe corrective efforts with formal guarantees, they lack a formal mechanism for selecting feasible corrective efforts that align with private preferences. Conversely, auctions can guarantee that any corrective effort is optimally distributed among agents to align with their private preferences, but generally cannot guarantee that these allocated efforts ensure safety within a multi-agent dynamical system. Our method integrates these approaches to create a framework for the incentive-compatible and efficient allocation of corrective control effort in real-time multi-agent systems.

\textbf{Contributions.} We formulate the multi-agent HOCBF safety requirement as an allocable resource by quantifying the necessary corrective effort that must be shared. Second, we adapt the progressive second price (PSP) mechanism to this setting to allocate safety responsibility efficiently according to agents' private valuations. Finally, we integrate these components into a unified framework that ensures both provable safety and efficient control allocations. We demonstrate our algorithm  using hardware experiments of multi-robot interactions, performed using the Robotarium~\cite{wilson_robotarium_2020} and compare its performance in safety satisfaction, allocation fairness, and efficiency to conventional approaches.

This framework provides a novel approach to allocating control actions that jointly satisfy dynamical safety constraints,
with potential applications in cooperative robotics \cite{emam_robust_2019}, transportation , and airspace management \cite{xu_guaranteed_2025}. 

\section{Related Research}\label{sec:related_res}
Control barrier functions (CBFs) are a tool for guaranteeing the forward invariance of safe sets in dynamical systems, with applications spanning robotics, adaptive cruise control, and multi-agent collision avoidance \cite{ames_control_2019, ames_control_2014, wang_safety_2017}. In multi-agent settings, however, existing CBF formulations often assume equal effort sharing, predefine agent contributions \cite{borrmann_control_2015}, or learn them from data \cite{remy_learning_2025}, leaving open the problem of allocating safety responsibility based on agents' private preferences. For systems where the control input does not appear in the first derivative of the safety function (i.e., relative degree greater than one), HOCBFs have been introduced to extend these safety guarantees to a broader class of dynamics \cite{xiao_high-order_2022}.

Separately, auction theory is well-established for allocating resources among strategic agents with private valuations \cite{dias_market-based_2006, ma_auction_2020}. Preserving the privacy of these valuations is critical, as they can represent sensitive information related to an agent's mission priorities or competitive strategies \cite{ma_auction_2020}. These mechanisms have been applied to problems including path planning, bandwidth allocation, and airspace deconfliction \cite{calliess_lazy_2012, zou_resource_2018, leet_combinatorial_2024}. Within this domain, Vickrey-Clarke-Groves (VCG) mechanisms are frequently employed to guarantee incentive-compatible bidding strategies \cite{vickrey_counterspeculation_1961, clarke_multipart_1971, groves_incentives_1973}. To address auction allocations over repeated interactions, recent works have also investigated alternative frameworks such as karma systems \cite{elokda_self-contained_2024, berriaud_learning_2023}. A significant limitation, however, is that these approaches typically focus on discrete goods or task-level coordination, not continuous-time safety constraints.

\section{Preliminaries and Problem Formulation}
\label{sec:prelim}
This section establishes the mathematical foundation for our work. We first introduce the agent dynamics and the concept of private agent preferences. We then formally state the central problem of this paper before detailing the HOCBF framework used to enforce the safety constraints.

\paragraph{Notation}
We use $\mathbb{R}^n$ and $\mathbb{R}^n_+$ to denote the $n$-dimensional real and non-negative real vector spaces, respectively. A function $\alpha: \mathbb{R} \to \mathbb{R}$ is an extended class-$\mathcal{K}_\infty$ function if it is strictly increasing, $\alpha(0)=0$, and defined over all real numbers. We denote the Lie derivative of a differentiable function $h(\mathbf{x})$ along a vector field $f(\mathbf{x})$ as $L_f h(\mathbf{x}) \coloneqq \nabla_{\mathbf{x}} h(\mathbf{x})f(\mathbf{x})$. Higher-order Lie derivatives are defined recursively as $L_f^k h(\mathbf{x}) \coloneqq L_f(L_f^{k-1} h(\mathbf{x}))$. Similarly, for a matrix-valued function $G(\mathbf{x})$, we define $L_G h(\mathbf{x}) \coloneqq \nabla_{\mathbf{x}} h(\mathbf{x}) G(\mathbf{x})$ \cite{khalil_nonlinear_2002}.

\subsection{Agent Dynamics and Nominal Control}
We consider a multi-agent system of $N$ autonomous agents, indexed by $\mathcal{I} = \{1, \ldots, N\}$. The dynamics of each agent $i \in \mathcal{I}$ are modeled as a nonlinear, control-affine system:
\begin{equation}
    \dot{\mathbf{x}}_i = f_i(\mathbf{x}_i) + g_i(\mathbf{x}_i) \mathbf{u}_i,
    \label{eq:system}
\end{equation}
where $\mathbf{x}_i \in \mathbb{R}^{n_i}$ is the state, $\mathbf{u}_i \in \mathbb{R}^{m_i}$ is the control input, and the functions $f_i$ and $g_i$ are locally Lipschitz continuous. Let $\bar{n} = \sum_{i=1}^{N} n_i$ be the total dimension of the joint state space. The aggregate state is $\mathbf{x} = [\mathbf{x}_1^\top, \ldots, \mathbf{x}_N^\top]^\top \in \mathbb{R}^{\bar{n}}$, and the aggregate control is $\mathbf{u} = [\mathbf{u}_1^\top, \ldots, \mathbf{u}_N^\top]^\top$. The full system dynamics can be written compactly as:
\begin{equation}
    \dot{\mathbf{x}} = f(\mathbf{x}) + G(\mathbf{x})\mathbf{u},
    \label{eq:full_system}
\end{equation}
where $f(\mathbf{x})$ stacks the drift dynamics $f_i(\mathbf{x}_i)$ and $G(\mathbf{x})$ is a block-diagonal matrix of the control mappings $g_i(\mathbf{x}_i)$.

\begin{assumption}[Nominal Control]
Each agent $i \in \mathcal{I}$ has a \textbf{nominal control} $\hat{\mathbf{u}}_i \in \mathbb{R}^{m_i}$ that defines its preferred control to achieve its primary objective (e.g., trajectory tracking).
\end{assumption}

\subsection{Multi-agent Safety and Problem Formulation}
Safety for the multi-agent system is defined by a set of $M$ multi-agent safety functions, indexed by the set $\mathcal{K} = \{1, \ldots, M\}$. These functions can model a variety of safety metrics, such as collision avoidance \cite{palani_collision_2024} or maintaining communication connectivity \cite{de_carli_distributed_2024}. Each safety function is represented by a continuously differentiable function $h_k(\mathbf{x}, t)$ where the condition $h_k(\mathbf{x}, t) \ge 0$ implies safety. To ensure that these functions represent inter-agent safety, each $h_k$ must depend on the states of a subset of agents $\mathcal{I}_k \subseteq \mathcal{I}$, where $|\mathcal{I}_k| \ge 2$.

Global safety is achieved when all safety functions are simultaneously non-negative, i.e. $h_k(\mathbf{x}, t) \geq 0$, $\forall k \in \mathcal{K}$. We capture this with a global, time-varying safe set $\mathcal{C}(t)$:
\begin{equation}
    \mathcal{C}(t) = \{ \mathbf{x} \in \mathbb{R}^{\bar{n}} \!:\! H(\mathbf{x}, t) \geq 0 \}, \enspace H(\mathbf{x}, t) \coloneqq \min_{k \in \mathcal{K}} h_k(\mathbf{x}, t).
    \label{eq:safe_set}
\end{equation}
This function $H(\mathbf{x}, t)$ aggregates all individual safety functions into a single function, serving as a global safety indicator for the entire system. To ensure the system remains within this global safe set, we utilize the HOCBF framework described in Section \ref{sec:HOCBF}.

Maintaining the system within this safe set often requires agents to deviate from their nominal control $\hat{\mathbf{u}}_i$. We formally define this control deviation as:
$$ \boldsymbol{\delta}_i \coloneqq \mathbf{u}_i - \hat{\mathbf{u}}_i, $$
where $\mathbf{u}_i$ denotes the new control that agent $i$ must apply to maintain safety. Since agents are self-interested, they prefer to minimize this deviation. To manage the allocation of this burden, we introduce an abstract, divisible resource called \textit{avoidance credit}, denoted by $c_i \in [0,1]$. We define this as a finite shared resource, normalized such that the total credit distributed among the agents sums to 1 (i.e., $\sum_{i \in \mathcal{I}} c_i = 1$). Intuitively, $c_i$ represents the percentage of the total \textit{right of way} secured by agent $i$: a higher allocation grants the agent a greater privilege to maintain its nominal trajectory, thereby reducing its share of the necessary corrective effort. This credit represents a normalized relief from control deviation, establishing a clear inverse relationship: an agent receiving full credit ($c_i=1$) is permitted to apply zero corrective deviation ($\|\boldsymbol{\delta}_i\| \to 0$), while an agent with no credit ($c_i=0$) must bear a larger share of the required corrective action.

\begin{definition}[Private Valuation Function]
Each agent $i \in \mathcal{I}$ has a private valuation function $v_i: [0, 1] \to \mathbb{R}_+$ that quantifies its utility for receiving avoidance credit $c_i$.
\end{definition}

\begin{assumption}[Valuation Function Properties]
\label{as:valuation_func}
Each valuation function $v_i$ is assumed to be strictly increasing (i.e., $\frac{dv_i}{dc_i} > 0$) and strictly concave (i.e., $\frac{d^2v_i}{dc_i^2} < 0$).
\end{assumption}

Given these preferences, our core objective is to ensure safety while optimizing the distribution of the burden.

\begin{problem}[Optimal and Private Allocation]
\label{prob:main}
Find the socially optimal allocation of avoidance credit, $c^* = [c_1^*, \dots, c_N^*]^\top$, that solves the welfare maximization problem:
\begin{equation}
    \max_{c \in [0,1]^N} \; \sum_{i=1}^N v_i(c_i)
    \quad \text{s.t.} \quad \sum_{i=1}^N c_i = 1,
    \label{eq:social_opt_main}
\end{equation}
under the constraint that agents do not reveal their complete private valuation functions $(v_1, \dots, v_N)$.
\end{problem}

The solution to Problem~\ref{prob:main} yields an allocation that maximizes the collective utility of the system, respecting agents' subjective costs for deviating from their nominal behavior. The privacy constraint is critical for practical implementation, as it eliminates the need for agents to reveal sensitive internal cost information. We propose to solve this constrained optimization problem using an auction game, detailed in Section~\ref{sec:theory}.

\subsection{Safety Enforcement via HOCBFs}
\label{sec:HOCBF}
To strictly enforce the safety requirement in~\eqref{eq:safe_set} on the dynamics in~\eqref{eq:full_system}, we employ the HOCBF framework \cite{xiao_high-order_2022}. This is necessary when the control input $\mathbf{u}$ does not directly influence the first time-derivative of $H(\mathbf{x}, t)$, a situation formally captured by the concept of relative degree.

\begin{definition}[Relative Degree]
A function $H(\mathbf{x}, t)$ is said to have \textbf{relative degree} $m \in \mathbb{N}$ with respect to the system dynamics~\eqref{eq:full_system} in a domain $\mathcal{D} \subseteq \mathbb{R}^{\bar{n}}$ if for all $\mathbf{x} \in \mathcal{D}$ \cite{khalil_nonlinear_2002}:
\begin{align*}
    L_G L_f^k H(\mathbf{x}, t) &= 0 \quad \forall k < m-1, \\
    L_G L_f^{m-1} H(\mathbf{x}, t) &\neq 0.
\end{align*}
\end{definition}

In other words, the relative degree $m$ is the number of times $H(\mathbf{x}, t)$ must be differentiated with respect to time before the control input $\mathbf{u}$ explicitly appears. A standard CBF requires a relative degree of one. When $m > 1$, we can utilize the HOCBF framework which systematically differentiates the safety function until the $m$-th derivative, $H^{(m)}$, which is the first that can be directly shaped by the control input.

Because the $\min$ operator in $H(\mathbf{x}, t)$ is non-differentiable, we approximate it using the continuously differentiable log-sum-exp (LSE) function \cite{usevitch_adversarial_2023}:
\begin{equation}
    \tilde{H}(\mathbf{x}, t) \coloneqq -\frac{1}{\lambda} \log\!\left( \sum_{k \in \mathcal{K}} \exp\!\big(-\lambda\, h_k(\mathbf{x},t)\big) \right),
\label{eq:lse_approx}
\end{equation}
where $\lambda > 0$ is a sharpness parameter. The LSE function is a smooth under-approximation, i.e., $\tilde{H}(\mathbf{x}, t) \le H(\mathbf{x}, t)$, ensuring a conservative safety margin \cite{lindemann_control_2019}.

To account for the higher relative degree, we define a sequence of auxiliary functions $\psi_i$ recursively as $\psi_0 = \tilde{H}$ and $\psi_i = \dot{\psi}_{i-1} + \alpha_i(\psi_{i-1})$ for $i \in \{1, \dots, m\}$, where each $\alpha_i$ is an extended class-$\mathcal{K}_\infty$ function. We also define the term $O(\tilde{H})$ to denote the Lie derivatives along $f$ and partial derivatives with respect to $t$ that arise from the chain rule with a degree less than or equal to $m-1$:
\begin{equation}
    O(\tilde{H}) = \sum_{j=1}^{m-1} \left[ \begin{multlined}
        L_f^j (\alpha_{m-j}(\psi_{m-j-1})) \\
        + \frac{\partial^j (\alpha_{m-j}(\psi_{m-j-1}) )}{\partial t^j} \end{multlined} \right].
\end{equation}

\begin{lemma}[Sufficient Condition for Forward Invariance]
\label{lem:sufficient_condition}
Let $\tilde{H}(\mathbf{x}, t)$ be the LSE approximation \eqref{eq:lse_approx} of $H(\mathbf{x}, t)$ \eqref{eq:safe_set} with relative degree $m$. The safe set $\mathcal{C}(t)$ \eqref{eq:safe_set} is rendered forward invariant $\forall t \geq 0$ if the joint control $\mathbf{u}(t)$ satisfies the affine constraint
\begin{equation}
    A(\mathbf{x},t)\mathbf{u}(t) \geq b(\mathbf{x},t), \quad \forall t \geq 0,
\label{eq:global_safety}
\end{equation}
where $A \coloneqq L_G L_f^{m-1}\tilde{H}$ and $b \coloneqq -L_f^m\tilde{H}-\frac{\partial^m \tilde{H}}{\partial t^m}-O(\tilde{H})-\alpha_m (\psi_{m-1})$.
\end{lemma}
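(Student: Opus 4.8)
The plan is to prove the lemma in two stages: first, an \emph{algebraic} step establishing that the affine constraint \eqref{eq:global_safety} is exactly the condition $\psi_m(\mathbf{x}, t) \geq 0$; and second, a \emph{dynamical} step showing, via a cascaded comparison argument, that enforcing $\psi_m \geq 0$ along trajectories propagates non-negativity back down the chain $\psi_{m-1}, \ldots, \psi_0$, yielding $\tilde{H} \geq 0$ and hence forward invariance of $\mathcal{C}(t)$.

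For the algebraic step, I would expand $\psi_m = \dot{\psi}_{m-1} + \alpha_m(\psi_{m-1})$ using the recursion and the chain rule. The key observation is that, by the relative-degree-$m$ condition, $L_G L_f^k \tilde{H} = 0$ for all $k < m-1$, so the control $\mathbf{u}$ appears in none of $\psi_0, \ldots, \psi_{m-1}$ nor in the derivatives collected in $O(\tilde{H})$. The control enters only through the single term of $\dot{\psi}_{m-1}$ that differentiates $L_f^{m-1}\tilde{H}$, producing $L_G L_f^{m-1}\tilde{H}\,\mathbf{u} = A\mathbf{u}$. Collecting the remaining drift Lie derivatives and explicit time partials into $L_f^m \tilde{H} + \frac{\partial^m \tilde{H}}{\partial t^m} + O(\tilde{H})$, the inequality $\psi_m \geq 0$ rearranges precisely to $A\mathbf{u} \geq b$ with $b$ as defined. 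This step is mostly bookkeeping, but it is where the relative-degree hypothesis is essential.

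For the dynamical step, I would assume (as is standard in the HOCBF framework and implicitly required here) that the initial state lies in the intersection of the zero-superlevel sets, i.e. $\psi_i(\mathbf{x}(0), 0) \geq 0$ for each $i \in \{0, \ldots, m-1\}$. The constraint $\psi_m \geq 0$ is equivalent to $\dot{\psi}_{m-1} \geq -\alpha_m(\psi_{m-1})$. Since each $\alpha_i$ is an extended class-$\mathcal{K}_\infty$ function with $\alpha_i(0) = 0$, the scalar system $\dot{y} = -\alpha_m(y)$ has $y = 0$ as an equilibrium and admits non-negative solutions for non-negative initial data; the comparison lemma then yields $\psi_{m-1}(t) \geq 0$ for all $t \geq 0$. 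Applying the same argument recursively, $\psi_i \geq 0$ gives $\dot{\psi}_{i-1} \geq -\alpha_i(\psi_{i-1})$, which together with $\psi_{i-1}(0) \geq 0$ gives $\psi_{i-1}(t) \geq 0$, cascading non-negativity down to $\psi_0 = \tilde{H}$. Finally, because the LSE function under-approximates the minimum, $\tilde{H}(\mathbf{x}, t) \leq H(\mathbf{x}, t)$, so $\tilde{H} \geq 0$ implies $H \geq 0$ and therefore $\mathbf{x}(t) \in \mathcal{C}(t)$ for all $t \geq 0$.

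I expect the main obstacle to be twofold. The more technical issue is justifying the comparison step rigorously: extended class-$\mathcal{K}_\infty$ functions need not be Lipschitz at the origin, so uniqueness of the comparison solution is not automatic, and I would invoke a Nagumo-type boundary argument (at $\psi_{i-1} = 0$ one has $\dot{\psi}_{i-1} \geq 0$, so trajectories cannot exit) rather than naive solution comparison. The second, more conceptual issue is that the statement suppresses the initial-condition requirement $\psi_i(\mathbf{x}(0),0)\ge 0$; I would make this explicit, since without it the cascade cannot be initiated and forward invariance of $\mathcal{C}(t)$ does not follow from the control constraint alone.
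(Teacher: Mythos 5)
Your proof is correct and follows essentially the same route as the paper: the paper's own proof simply invokes the under-approximation property $\tilde{H}(\mathbf{x},t) \le H(\mathbf{x},t)$ and then cites the HOCBF framework for the forward-invariance condition, which you instead derive explicitly via the algebraic identification of the constraint with $\psi_m \ge 0$ and the cascaded comparison (or Nagumo-type) argument down to $\psi_0 = \tilde{H} \ge 0$. Your observation that the lemma suppresses the initial-condition hypothesis $\psi_i(\mathbf{x}(0),0) \ge 0$ for $i = 0,\dots,m-1$ is a correct and worthwhile catch---the paper inherits that assumption implicitly from the cited HOCBF framework without stating it.
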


\begin{proof}
Recall that the LSE function is a smooth under-approximation, satisfying $\tilde{H}(\mathbf{x}, t) \le H(\mathbf{x}, t)$ for all $(\mathbf{x}, t) \in \mathbb{R}^{\bar{n}} \times \mathbb{R}_+$. This implies that the superlevel set of $\tilde{H}$ is contained within the safe set $\mathcal{C}(t)$. Therefore, according to the HOCBF framework \cite{xiao_high-order_2022}, rendering the superlevel set of $\tilde{H}$ forward invariant is sufficient for rendering $\mathcal{C}(t)$ forward invariant. The condition for this is:
$$ L_f^m \tilde{H} + L_G L_f^{m-1} \tilde{H} \mathbf{u} + \frac{\partial^m \tilde{H}}{\partial t^m} + O(\tilde{H}) + \alpha_m (\psi_{m-1}) \geq 0. $$
Rearranging this inequality yields the affine constraint $A\mathbf{u} \ge b$ as defined in the lemma statement.
\end{proof}
This linear constraint in~\eqref{eq:global_safety} is the core safety requirement that any valid control law $\mathbf{u}$ must satisfy in order to maintain forward invariance of the safe set $\mathcal{C}(t)$.
\section{Safety Allocation Mechanism}
\label{sec:theory}

To solve Problem~\ref{prob:main}, we first translate the global HOCBF safety constraint into a quantifiable resource representing each agent's share of the safety burden. We then introduce an auction game that allocates this responsibility in a way that is socially optimal, respects agent privacy, and guarantees safety.

\subsection{Safety Deficit and Responsibility Allocation}
When the multi-agent system is operating safely under nominal control, the condition $A\hat{\mathbf{u}} \geq b$ holds, and no intervention is required. A safety-critical event occurs when this condition is violated. We quantify this violation using the \textit{safety deficit}:
\begin{equation}
    S(\mathbf{x}, t) \coloneqq b - A\hat{\mathbf{u}}.
    \label{eq:safety_deficit}
\end{equation}
When $S > 0$, the nominal controls $\hat{\mathbf{u}}$ are unsafe, and a total corrective effort of at least $S$ must be applied by the system. This requirement can be expressed in terms of the total safety correction, $\Delta \coloneqq A(\mathbf{u}-\hat{\mathbf{u}})$, leading to the compact safety condition $\Delta \geq S$.

Since the system is composed of multiple agents, this total safety correction $\Delta$ is the sum of individual contributions:
\begin{equation}
    \Delta = \sum_{i \in \mathcal{I}} \Delta_i.
\end{equation}
Each scalar term $\Delta_i \coloneqq A_i (\mathbf{u}_i - \hat{\mathbf{u}}_i) = A_i \boldsymbol{\delta}_i$ represents the individual contribution of agent $i$ to the total safety correction. Physically, $\Delta_i$ corresponds to the change in the high-order time derivative of the safety function $\tilde{H}(\mathbf{x}, t)$ induced by agent \textit{i}'s deviation from its nominal control. The row vector $A_i$ consists of the columns of the matrix $A$ corresponding to agent $i$'s control inputs, $\mathbf{u}_i$.
The global safety constraint can then be written as a shared responsibility among all agents:
\begin{equation}
    \sum_{i \in \mathcal{I}} \Delta_i \geq S.
\label{eq:new_safety}
\end{equation}
This formulation frames safety as an allocable burden. In practice, only a subset of agents, $\mathcal{I}_{\text{active}} \subseteq \mathcal{I}$, may be directly responsible for a potential safety violation. To identify these agents, we evaluate the HOCBF safety condition individually for each safety function $h_k(\mathbf{x}, t)$ using the nominal controls $\hat{\mathbf{u}}$. If the condition
$$ L_f^m h_k + L_G L_f^{m-1} h_k \hat{\mathbf{u}} + \frac{\partial^m h_k}{\partial t^m} + O(h_k) + \alpha_m (\psi_{m-1}) < 0 $$
is met for a given $h_k$, then that safety function is considered active. All agents whose states are arguments of any active safety function are added to the set $\mathcal{I}_{\text{active}}$. We note that this approach may group agents from independent conflict scenarios (i.e., violations of safety functions involving disjoint sets of agents) into a single auction. For the scope of this work, we treat this as a single resolution event, while partitioning $\mathcal{I}_{\text{active}}$ into decoupled subgroups is a topic for future work.

To ensure computational efficiency, the auction mechanism is not invoked at every time step. An auction is triggered only when a significant change in the safety landscape occurs. Specifically, the conditions for triggering a new auction are: 1) the system would transition from safe to unsafe state under nominal control (i.e., the safety deficit $S$ becomes positive), or 2) the composition of the active set $\mathcal{I}_{\text{active}}$ changes from the previous time step. This event-triggered approach avoids redundant computation while the set of interacting agents remains constant.

To connect the abstract avoidance credit $c_i$ from Section~\ref{sec:prelim} to the required safety correction $\Delta_i$, we define the following conversion. Because avoidance credit is a unit-valued resource distributed among the active agents, any valid allocation must satisfy $\sum_{i \in \mathcal{I}_{\text{active}}} c_i = 1$. The safety correction $\Delta_i$ for an agent receiving credit $c_i$ is then given by:
\begin{equation}
    \Delta_i(c_i) = \frac{(1 - c_i) S}{\sum_{j \in \mathcal{I}_{\text{active}}} (1 - c_j)}.
    \label{eq:credit_mapping}
\end{equation}
This mapping ensures that the total required safety correction is always met, i.e., $\sum_{i \in \mathcal{I}_{\text{active}}} \Delta_i(c_i) = S$. It provides an intuitive link between credit and responsibility: an agent with full credit ($c_i=1$) has zero corrective responsibility ($\Delta_i=0$), while agents with less credit proportionally share the burden of satisfying the safety deficit.

An important consequence of this formulation is that it greatly reduces the required frequency of auctions. Between auction events, each agent retains its last assigned avoidance credit share, $c_i^*$. However, because the total safety deficit $S(\mathbf{x}, t)$ evolves continuously with the system state, the actual safety correction required from each agent, $\Delta_i^*$, also adapts dynamically according to Eq.~\eqref{eq:credit_mapping}. This design forces agents to continuously adjust their control effort to meet the current safety demand without needing to run a new auction at every time step. This computational efficiency comes at the cost of temporal uncertainty: agents bid at the start of a safety violation without knowing the full cumulative effort required to resolve the violation. Because the framework does not currently predict the future evolution of $S(\mathbf{x}, t)$, agents effectively commit to a share of the safety burden without a guarantee of the total cost, a limitation we aim to address in future work by incorporating prediction horizons into the bidding process.

\subsection{An Auction Game for Optimal Allocation}
To solve Problem~\ref{prob:main}, we implement an auction game based on the PSP auction mechanism. The PSP auction is a method for resource allocation that uses a Vickrey-Clarke-Groves (VCG) payment rule to produce socially optimal resource allocations without revealing agents' private incentives~\cite{lazar2001design,ma_auction_2020}.

Before defining the game, we first introduce the PSP allocation and pricing mechanism. Given a bid profile $\mathbf{b} = (\mathbf{b}_i)_{i \in \mathcal{I}_{\text{active}}}$, where each bid $\mathbf{b}_i = (\beta_i, d_i)$ consists of a unit price $\beta_i \in \mathbb{R}_+$ and demand $d_i \in \mathbb{R}_+$ for the avoidance credit, the PSP mechanism computes the allocation $c_i \in \mathbb{R}_+$ and price $\pi_i \in \mathbb{R}_+$ for each agent as follows:

\begin{itemize}
    \item \textbf{Allocation Rule:} Given a profile of bids $\mathbf{b}$, the auctioneer determines the allocation $c(\mathbf{b})$ by solving the following optimization problem:
    \begin{equation}
        \begin{aligned}
            & \underset{c \in \mathbb{R}^{|\mathcal{I}_{\text{active}}|} }{\text{maximize}}
            & & \sum_{i \in \mathcal{I}_{\text{active}}} \beta_i c_i \\
            & \text{subject to}
            & & \sum_{i \in \mathcal{I}_{\text{active}}} c_i = 1, \\
            & & & c_i \le d_i, \quad \forall i \in \mathcal{I}_{\text{active}}.
        \end{aligned}
    \end{equation}
    
    \item \textbf{Payment Rule:} Each agent $i$ pays an amount $\pi_i(\mathbf{b})$ determined by the VCG rule~\cite{vickrey_counterspeculation_1961, clarke_multipart_1971, groves_incentives_1973}, which equals the declared welfare loss its participation causes to others (i.e. the externality):
    \begin{equation}
        \pi_i(\mathbf{b}) = \sum_{j \neq i} \beta_j (c_j(\mathbf{b}_{(i)}) - c_j(\mathbf{b})),
        \label{eq:vcg_payment}
    \end{equation}
    where $\mathbf{b}_{(i)} = ((\beta_i, 0), \mathbf{b}_{-i})$ is the bid profile when agent $i$ is considered absent (i.e., its demand $d_i$ is set to zero), and $\mathbf{b}_{-i}$ denotes the bids of all other agents~\cite{ma_auction_2020}.
\end{itemize}

We now formally define the game played by the agents under this mechanism.

\begin{definition}[Avoidance Credit Auction Game]
\label{def:auction_game}
An ACAG, $\mathcal{G} = (\mathcal{I}_{\text{active}}, \{\mathcal{B}_i\}_{i \in \mathcal{I}_{\text{active}}}, \{U_i\}_{i \in \mathcal{I}_{\text{active}}})$, is played by $\mathcal{I}_{\text{active}}$ players where each player $i$ has action (bid) set $ \mathcal{B}_i = [0, \infty)\times [0,1]$. The joint action space is denoted by $\boldsymbol{\mathcal{B}} \coloneqq \prod_{j \in \mathcal{I}_{\text{active}}} \mathcal{B}_j$, and each player has a utility function $U_i: \boldsymbol{\mathcal{B}} \to \mathbb{R}$.

Each player $i$'s action $\mathbf{b}_i = (\beta_i, d_i)$ consists of $\beta_i \in [0, \infty)$, player $i$'s price per unit of avoidance credit, and $d_i \in [0,1]$, its maximum credit quantity requested. Each player's utility $U_i$ is defined as:
\begin{equation}
    U_i(\mathbf{b}) = v_i(c_i(\mathbf{b})) - \pi_i(\mathbf{b}), \quad \forall \mathbf{b} \in \boldsymbol{\mathcal{B}},
\end{equation}
where $c_i(\mathbf{b})$ and $\pi_i(\mathbf{b})$ are the allocation and payment determined by the PSP mechanism for the joint bid profile $\mathbf{b}$.
\end{definition}

In this game, agents are effectively competing for the \textit{right of way}. A bid represents an agent's willingness to pay to maintain its nominal course. An agent that wins the auction (secures a high $c_i$) is granted a larger share of avoidance credit, which translates via \eqref{eq:credit_mapping} into a smaller required control deviation. Conversely, agents who lose the auction (receive low $c_i$) take on more of the burden of the evasive maneuver. Thus, the auction acts as a market mechanism where agents with large private valuations can pay to secure priority, effectively shifting the maneuvering burden to agents with lower private valuations.

The goal for each self-interested agent is to choose a bid $\mathbf{b}_i$ that maximizes its own utility $U_i$. The stable outcome of this game is described by the Nash equilibrium.

\begin{definition}[Nash Equilibrium]
A bid profile $\mathbf{b}^* = (\mathbf{b}_i^*, \mathbf{b}_{-i}^*)$ is a Nash equilibrium if no agent $i$ can improve its utility by unilaterally changing its bid, i.e., for every agent $i \in \mathcal{I}_{\text{active}}$:
$$ U_i(\mathbf{b}_i^*, \mathbf{b}_{-i}^*) \geq U_i(\mathbf{b}_i, \mathbf{b}_{-i}^*) \quad \forall \mathbf{b}_i \in \mathcal{B}_i. $$
\end{definition}

To find the Nash equilibrium, we adopt the iterative best-response dynamic from the PSP auction literature~\cite{ma_auction_2020, lazar2001design}. We emphasize that this is an established solution method, not a novel contribution. In this process, agents asynchronously update their bids; at each iteration, a single player computes and submits a new bid that maximizes its personal utility $U_i$ given the current bids of the other players. Under the properties in Assumption~\ref{as:valuation_func}, this dynamic process is guaranteed to converge to a unique Nash equilibrium~\cite{ma_auction_2020}.

Furthermore, a key property of this VCG-based mechanism is that it is incentive-compatible. This means it is a weakly dominant strategy for each agent to engage in truthful bidding, where it reports its true preferences~\cite{ma_auction_2020, lazar2001design}. In the context of this game, truthful bidding means an agent sets the price component of its bid, $\beta_i$, equal to its true marginal valuation at its demanded quantity, $d_i$, i.e., $\beta_i = \frac{dv_i}{dc_i}(d_i)$. When agents employ this optimal strategy, the Nash equilibrium reached by the iterative process coincides with the socially optimal allocation $c^*$ that solves Problem~\ref{prob:main}. Thus, by adopting the PSP auction, we achieve a private and socially optimal solution to our safety allocation problem.

\subsection{Control Synthesis from Optimal Allocation}
Once the auction game converges to the optimal allocation $c^*$, the result is mapped back to a concrete control action for each agent. First, the individual safety corrections $\Delta_i^*$ are computed using~\eqref{eq:credit_mapping}.

Each agent $i \in \mathcal{I}_{\text{active}}$ must then compute a new control input $\mathbf{u}_i$ that satisfies its assigned share of the safety burden:
$$ A_i (\mathbf{u}_i - \hat{\mathbf{u}}_i) \geq \Delta_i^*. $$
While multiple control inputs can satisfy this inequality, a simple approach is to find the one that does so with minimum effort by computing the minimum-norm solution using the Moore-Penrose pseudo-inverse:
\begin{equation}
    \mathbf{u}_i = \hat{\mathbf{u}}_i + A_i^\dag \Delta_i^*,
    \label{eq:control_update}
\end{equation}
where $A_i^\dag$ is the pseudo-inverse of $A_i$. This provides a unique and consistent method for agents to compute their safety-compliant control actions, though they retain the flexibility to choose any control that meets the constraint $A_i (\mathbf{u}_i - \hat{\mathbf{u}}_i) \geq \Delta_i^*$ in any way they see fit.
\section{Experimental Demonstration} \label{sec:results}

We demonstrate our approach using the Robotarium multi-robot testbed, which provides a platform with differential-drive robots that closely approximate unicycle dynamics \cite{wilson_robotarium_2020}. To specifically test the HOCBF aspect, we constrain the robots to have a constant forward velocity, allowing control only through the angular velocity. This deliberate design choice ensures that pairwise distance safety functions have a relative degree of two, highlighting the practical feasibility of using HOCBFs in our method.

\subsection{Experimental Model}
The experiments involve four agents with unicycle dynamics. Each agent $i$ has state $\mathbf{x}_i = [x_i \; y_i \; \theta_i]^\top$, controlled via its angular velocity $\omega_i$, with a fixed linear speed $v$. The kinematic model is given by:
\begin{equation}
    \dot{x}_i = v \cos\theta_i, \quad
    \dot{y}_i = v \sin\theta_i, \quad
    \dot{\theta}_i = \omega_i.
\end{equation}
This can be written in the standard control-affine form $\dot{\mathbf{x}}_i = f_i(\mathbf{x}_i) + g_i(\mathbf{x}_i)\omega_i$, where the drift dynamics $f_i(\mathbf{x}_i)$ and control vector field $g_i(\mathbf{x}_i)$ are:
\begin{equation}
    f_i(\mathbf{x}_i) =
    \begin{bmatrix}
        v\cos\theta_i \\ v\sin\theta_i \\ 0
    \end{bmatrix},
    \quad
    g_i(\mathbf{x}_i) =
    \begin{bmatrix}
        0 \\ 0 \\ 1
    \end{bmatrix}.
\end{equation}
Each agent follows a nominal control policy that combines a goal-seeking behavior in the $x$ direction with a line-keeping term in $y$. Specifically, each agent is assigned a reference line $y_{\mathrm{ref},i}$ (corresponding to its initial $y$ position), and the nominal heading is determined by a vector field that pulls the agent toward $(x_i^{\text{goal}}, y_{\mathrm{ref},i})$:
\begin{equation}
    \theta_i^{\mathrm{des}} = \mathrm{atan2}\!\left(-k_y (y_i - y_{\mathrm{ref},i}),\; k_x (x_i^{\text{goal}} - x_i)\right),
\end{equation}
where $k_x, k_y > 0$ are proportional gains. The nominal angular velocity is then a proportional controller that drives the agent's heading to this desired value:
\begin{equation}
    \hat{\omega}_i = k_\theta \big(\theta_i^{\mathrm{des}} - \theta_i\big),
\end{equation}
with $k_\theta > 0$ as a heading gain. This control law is used to compute the nominal input $\hat{\mathbf{u}}$ for all agents. The specific parameters used for the experiments are summarized in Table~\ref{tab:params}.

\begin{table}[!htbp]
\caption{Experimental Parameters}
\label{tab:params}
\centering
\begin{tabular}{c c c c}
\hline
\textbf{Agent} & \textbf{Initial Pos.} ($x,y$) & \textbf{Goal Pos.} ($x,y$) & \textbf{$\alpha_i$} (Base Value) \\
\hline
1 & $(-1.5, 0.0)$ & $(0.5, 0.0)$ & 1.0 \\
2 & $(-0.5, -0.1)$ & $(-1.5, -0.1)$ & 1.0 \\
3 & $(0.5, -0.1)$ & $(-1.0, -0.1)$ & 1.0 \\
4 & $(1.5, -0.1)$ & $(-0.5, -0.1)$ & 1.0 \\
\hline
\end{tabular}

\vspace{4pt} 

\begin{tabular}{l r}
\hline
\multicolumn{2}{c}{\textbf{Shared System \& Controller Parameters}} \\
\hline
Linear Speed ($v$) & 0.1 m/s \\
Safety Distance ($d$) & 0.12 m \\
Controller Gains ($k_x, k_y, k_\theta$) & \{0.5, 2.5, 2.0\} \\
HOCBF Gains ($\kappa_1, \kappa_2$) & \{1.2, 1.2\} \\
Valuation Shape ($k$) & 5.0 \\
Valuation Scaling ($\gamma$) & 8.0 \\
\hline
\end{tabular}
\end{table}

\subsection{HOCBF Constraint Derivation}
\label{sec:hocbf_derivation}
Safety is defined by a set of pairwise distance safety functions. For each unique pair of agents $(i, j)$, we define a function, indexed by $k$, via $h_k(\mathbf{x}) = \|\mathbf{p}_i - \mathbf{p}_j\|^2 - d^2 \ge 0$, where $\mathbf{p}_i = [x_i, y_i]^\top$. To apply the HOCBF framework, we compute the Lie derivatives of $h_k$ with respect to the aggregate system dynamics. The first Lie derivative with respect to the control matrix $G(\mathbf{x})$ is zero, $L_G h_k(\mathbf{x}) = \mathbf{0}$, confirming a relative degree $m > 1$. The first derivative with respect to the drift dynamics $f(\mathbf{x})$ is:
\begin{equation}
\begin{aligned}
    L_f h_k(\mathbf{x}) = 2\big[ &(x_i - x_j)(v\cos\theta_i - v\cos\theta_j) \\
    &+ (y_i - y_j)(v\sin\theta_i - v\sin\theta_j) \big].
\end{aligned}
\end{equation}
The second derivative with respect to the drift is:
\begin{equation}
\begin{aligned}
    L_f^2 h_k(\mathbf{x}) = 2\big[ &(v\cos\theta_i - v\cos\theta_j)^2 \\
    &+ (v\sin\theta_i - v\sin\theta_j)^2 \big],
\end{aligned}
\end{equation}
and the mixed Lie derivative $L_G L_f h_k(\mathbf{x})$ is a row vector that is zero everywhere except for the entries corresponding to the controls of agents $i$ and $j$. These non-zero entries are:
\begin{align*}
    [L_G L_f h_k(\mathbf{x})]_i &= 2v \big[ -(x_i - x_j)\sin\theta_i + (y_i - y_j)\cos\theta_i \big] \\
    [L_G L_f h_k(\mathbf{x})]_j &= 2v \big[ (x_i - x_j)\sin\theta_j - (y_i - y_j)\cos\theta_j \big].
\end{align*}
This confirms that the relative degree is $m=2$. We use linear class-$\mathcal{K}_\infty$ functions, $\alpha_1(s) = \kappa_1 s$ and $\alpha_2(s) = \kappa_2 s$, where $\kappa_1, \kappa_2 > 0$ are positive constants. Substituting these into the general HOCBF condition yields the final form for each active pair:
\begin{equation}
\label{eq:hocbf_individual}
\begin{aligned}
    L_f^2 h_k + L_G L_f h_k \mathbf{u} + (\kappa_1+\kappa_2) L_f h_k + \kappa_1 \kappa_2 h_k \ge 0.
\end{aligned}
\end{equation}
This inequality is evaluated for each pair under the nominal control policy $\hat{\mathbf{u}}$ to determine which agents should be added to the active set, $\mathcal{I}_{\text{active}}$. To derive the single safety constraint used by the auction, the individual barrier functions $h_k$ are first aggregated into a global barrier function $\tilde{H}$ using the log-sum-exp approximation. The HOCBF framework is then applied to $\tilde{H}$ to derive the global affine safety constraint $A\mathbf{u} \ge b$.

\subsection{Valuation Function Design}
Each agent $i$ uses a private valuation function $v_i(c_i)$ for avoidance credit $c_i$, satisfying Assumption~\ref{as:valuation_func}. To explore how preferences may change dynamically, we model valuation based on past interactions:
\begin{equation}
    v_i(c_i) = \gamma^{n_i}\,\alpha_i \left(1 - e^{-k c_i}\right), \label{eq:exp_valuation}
\end{equation}
where $n_i$ counts the number of distinct auction events (i.e., collision encounters) agent $i$ has been involved in, and the parameters $\gamma, \alpha_i, k$ are defined in Table~\ref{tab:params}. This design is motivated by the idea that agents may become more willing to bid higher as they are repeatedly required to deviate from their nominal paths, thus reducing their required deviation in later collisions.

\subsection{Controllers for Comparison}
To isolate and evaluate the contribution of our auction mechanism, we compare two safety filter implementations. Both controllers are implemented using the same HOCBF formulation derived in Section~\ref{sec:hocbf_derivation}, utilizing the same dynamics models and identical $\kappa$ gains (as defined in Table~\ref{tab:params}). As such, both controllers generate the same global safety constraint $A\mathbf{u} \ge b$. The only difference between them is the strategy used to solve for the final control $\mathbf{u}$.

\paragraph{Baseline: Standard HOCBF Controller}
This controller represents the standard, non-preference-aware method for resolving HOCBF constraints. It solves a single quadratic program (QP) that finds a control $\mathbf{u}$ that minimally alters (in an $L_2$-norm sense) the nominal controls $\hat{\mathbf{u}}$ while satisfying the global safety constraint. This approach is practically identical to the Robotarium's default supervisor \cite{emam_robust_2019}, albeit with a different formulation for the CBF constraint:
\begin{equation}
    \mathbf{u}^* = \arg\min_{\mathbf{u}} \sum_{i \in \mathcal{I}} \|\mathbf{u}_i - \hat{\mathbf{u}}_i\|^2 \quad \text{s.t.} \quad A\mathbf{u} \ge b.
\end{equation}
This QP-based approach is impartial as it does not consider agent-specific priorities. The minimization of the sum of squared errors implicitly distributes the corrective burden based on the agents' relative control authority over the constraint, not their preferences.

\paragraph{Proposed: Auction-Based HOCBF Controller}
Our proposed controller does not solve the above QP. Instead, it computes the safety deficit $S(\mathbf{x}, t) = b(\mathbf{x}, t) - A(\mathbf{x}, t)\hat{\mathbf{u}}(\mathbf{x})$ and uses the PSP auction to allocate this burden. Agents in $\mathcal{I}_{\text{active}}$ bid for avoidance credit $c_i$ based on their private, dynamic valuation functions~\eqref{eq:exp_valuation}. The resulting allocation $c^*$ is translated into individual safety corrections $\Delta_i^*$~\eqref{eq:credit_mapping}, which are then implemented using the pseudo-inverse~\eqref{eq:control_update}.

\subsection{Results and Analysis}
We implement the 4-agent collision scenario defined by the parameters in Table~\ref{tab:params}. In this setup, all agents are assigned an identical base valuation ($\alpha_i = 1.0$). This ensures that agents have symmetric preferences at the start of the experiment, allowing us to isolate the effect of the dynamic valuation term $\gamma^{n_i}$ as collision histories differ between agents.

\begin{figure}[!htbp]
    \centering
    \includegraphics[width=\columnwidth]{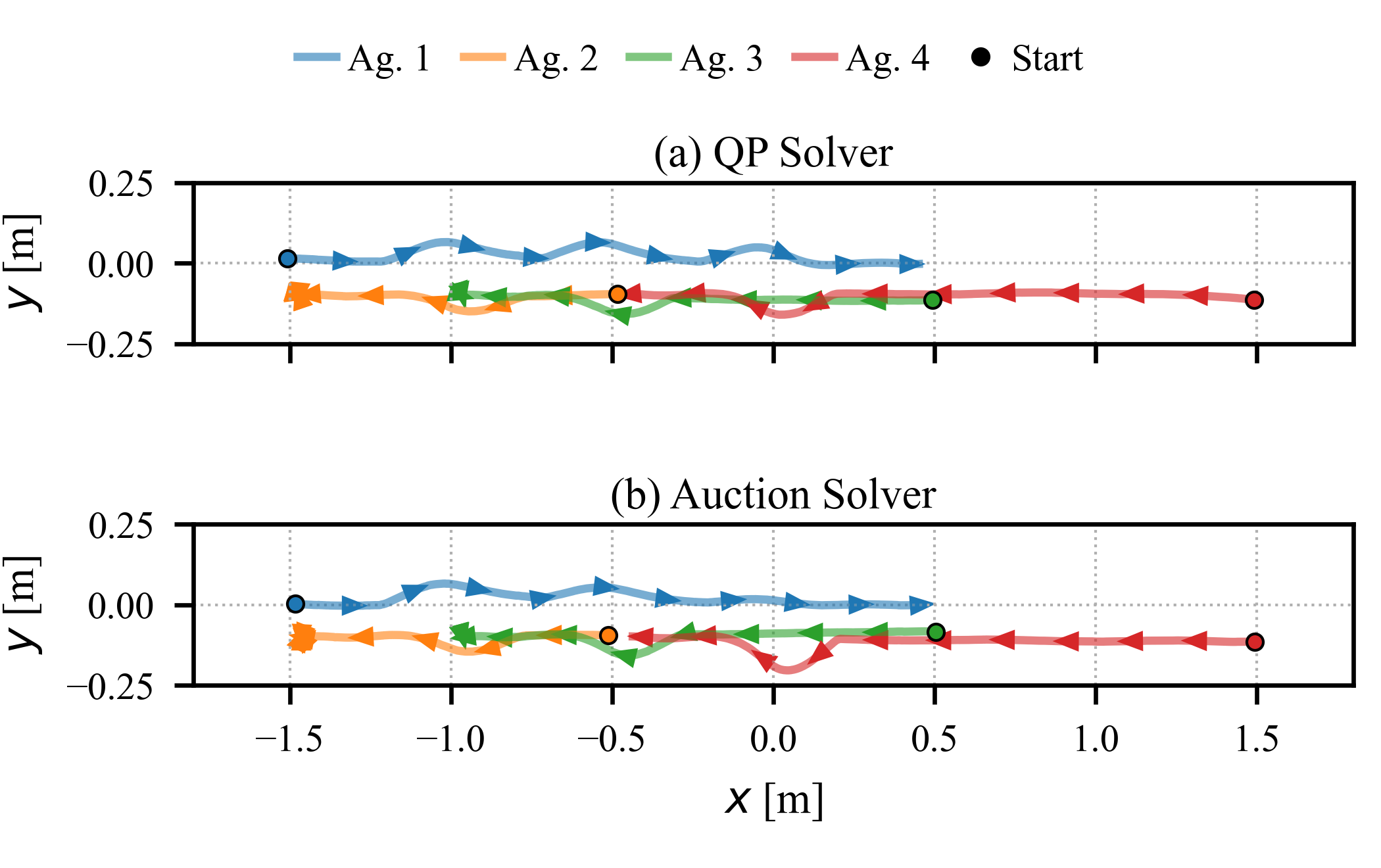}
    \caption{Agent trajectories for the 4-agent collision scenario with arrows indicating direction of travel. (a) With the baseline QP-HOCBF controller, Agent 1 (blue) is forced to deviate significantly in all three collision events. (b) With our proposed Auction-HOCBF controller, Agent 1 deviates significantly in the first collision, but its dynamically increasing valuation function causes other agents to bear more of the burden in subsequent collisions.}
    \label{fig:trajectories}
\end{figure}

\begin{figure}[!hbp]
    \centering
    \includegraphics[width=\columnwidth]{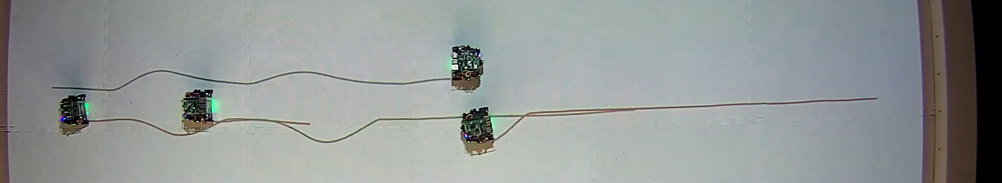}
    \caption{Snapshot of the hardware experiment on the Robotarium platform for the system utilizing the Auction-HOCBF controller during the third collision event.}
    \label{fig:experiment}
\end{figure}

Both controllers successfully guaranteed system safety throughout the experiment. The global safety function remained non-negative ($\tilde{H}(\mathbf{x},t) \ge 0$) at all times for both cases, confirming that the minimum separation distance $d$ was never violated. Importantly, because both controllers utilize the exact same HOCBF constraint formulation (derived in Section~\ref{sec:hocbf_derivation}), any differences in agent trajectories and control effort can be attributed directly to the allocation mechanism (QP vs. Auction) rather than to discrepancies in the safety conditions.

The trajectories of the agents under both controllers is shown in Fig.~\ref{fig:trajectories}. Additionally, Fig.~\ref{fig:experiment} shows the hardware experiment being performed on the Robotarium. With the baseline QP controller (Fig.~\ref{fig:trajectories}a), Agent 1 (blue) is forced to deviate significantly in all three distinct collision events it encounters. This occurs because the QP is impartial and solves the same myopic $L_2$-norm minimization problem for each collision, treating every event in isolation without regard for an agent's cumulative effort or prior burden.

    

In contrast, our proposed auction-based controller (Fig.~\ref{fig:trajectories}b) exhibits a distinct shift in burden driven by interaction history. Agent 1 is involved in the first collision (with Agent 2), which increments its collision counter $n_1$. In the subsequent collisions, its valuation for avoidance credit is now higher than its peers. Agent 1 therefore wins more avoidance credit in future auctions and forces the other agents to perform a greater share of the avoidance maneuver.

\begin{figure}[!htbp]
    \centering
    \includegraphics[width=\columnwidth]{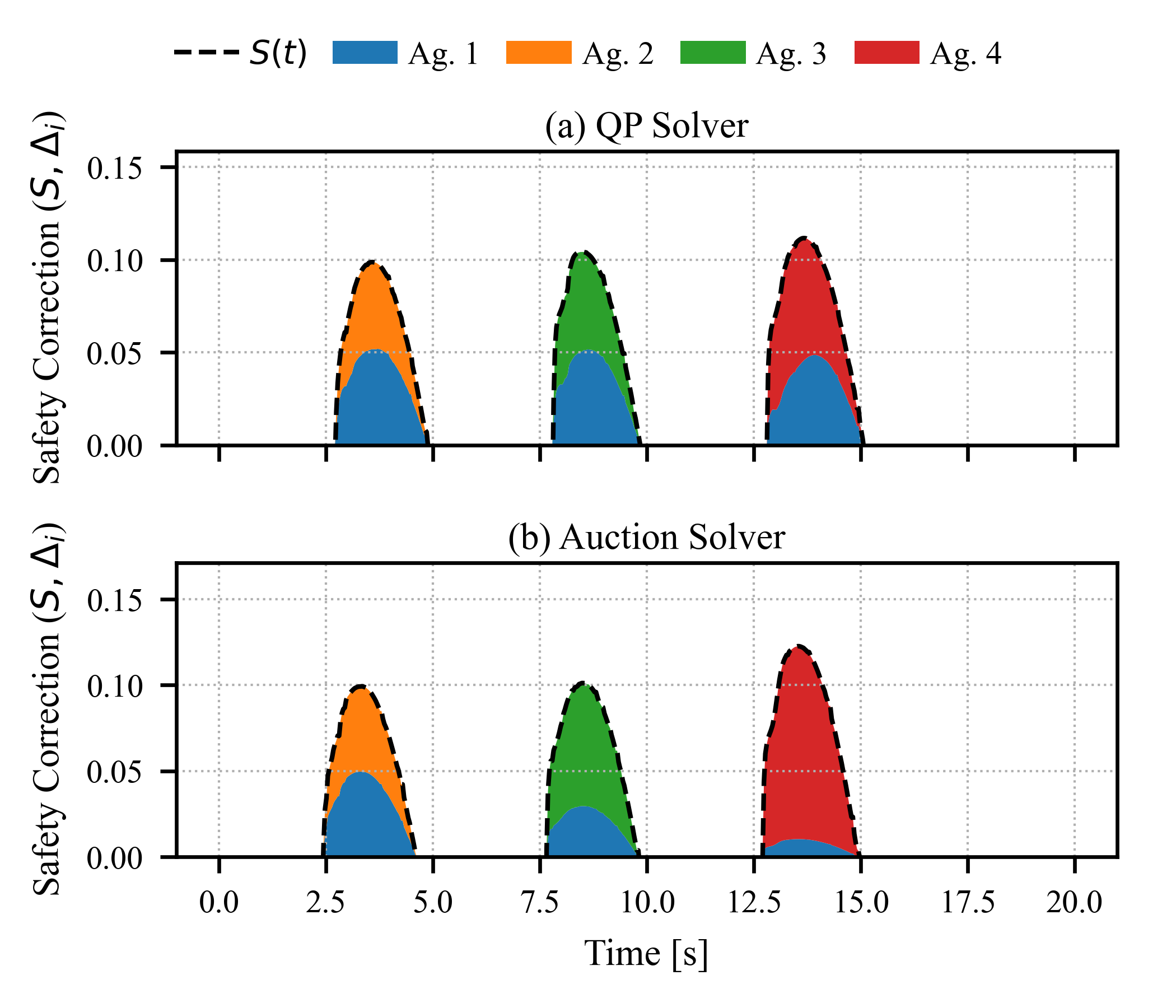}
    \caption{Stacked area plot of safety correction ($\Delta_i$) over time. The combined height of the stacks corresponds to the instantaneous safety deficit $S(t)$ (black dashed line). (a) The baseline QP-HOCBF divides the effort in a similar way for each collision. (b) Our Auction-CBF dynamically shifts the effort allocation in later collisions as Agent 1's valuation increases.}
    \label{fig:effort_plot}
\end{figure}

This difference in allocation is quantified in Fig.~\ref{fig:effort_plot}. The stacked area plot shows the distribution of the safety correction $\Delta_i$ over time, where the total height at any instant sums to the safety deficit $S(t)$ at that time. The resulting auction allocations of avoidance credit for each of the collision events highlights that Agent 1 starts to win more avoidance credit in later auctions as its valuation increases:
\begin{itemize}
    \item \textbf{Event 1 (Agent 1 vs. 2):} With $n_1=0, n_2=0$, valuations are equal. The auction allocates credit $c_1=0.50$ and $c_2=0.50$, resulting in a shared burden.
    \item \textbf{Event 2 (Agent 1 vs. 3):} Agent 1 now has $n_1=1$. Its higher valuation wins it $c_1=0.7079$ of the credit, forcing Agent 3 to take $c_3=0.2921$.
    \item \textbf{Event 3 (Agent 1 vs. 4):} With $n_1=2$, Agent 1's valuation is even higher, and it wins $c_1=0.9159$ of the credit, leaving only $c_4=0.0841$ for Agent 4.
\end{itemize}

The long-term impact of this mechanism is quantified in Table~\ref{tab:cumulative_effort}, which lists the final cumulative control effort for each agent. This quantity is calculated as the time integral of the absolute deviation between the applied control input and the nominal control input, $\int \|\mathbf{u}_i(t) - \hat{\mathbf{u}}_i(t)\| dt$. Physically, this represents the total amount of corrective turning (in radians) each agent had to perform to maintain safety.

\begin{table}[!htbp]
\centering
\caption{Final Cumulative Control Effort (Radians)}
\label{tab:cumulative_effort}
\begin{tabular}{c c c}
\hline
\textbf{Agent} & \textbf{QP Effort} & \textbf{Auction Effort} \\
\hline
1 & 5.201 & 3.395 \\
2 & 1.610 & 1.664 \\
3 & 1.713 & 2.147 \\
4 & 2.178 & 3.458 \\
\hline
\textbf{Total Effort} & 10.702 & 10.664
\end{tabular}
\end{table}

In the QP case, Agent 1 accumulates a significantly higher total control effort (5.201 rad) compared to the other agents, as it is penalized in every collision. In the auction case, Agent 1's cumulative effort is reduced by approximately 35\% to 3.395 rad. This reduction is achieved because Agent 1's valuation increases with each collision, making it bid more aggressively for avoidance credit in later collisions. This effectively shifts the maneuvering burden onto Agents 2, 3, and 4, thereby balancing the cumulative control effort across the group compared to the baseline. In both cases, the total amount of effort contributed from all agents was nearly identical. 
\section{Conclusion} \label{sec:conclusion}
In this work, we presented a framework for allocating safety responsibility in multi-agent systems. By mapping the HOCBF safety constraint to a divisible resource (avoidance credit), we enabled the use of a progressive second price auction to distribute corrective control effort based on agents' private valuations, without requiring them to reveal their full valuation functions. In this way, the method provides safety guarantees while respecting agent preferences. Algorithm performance was demonstrated in a preliminary experiment in the Robotarium, illustrating how the dynamic valuation scheme enables the system to dynamically adjust the distribution of the corrective burden among agents based on their evolving priorities.

Future work will focus on incorporating feasibility guarantees within control limits and addressing temporal uncertainty via prediction horizons. To improve scalability, we will investigate partitioning agents into decoupled subgroups for parallel auctions. Finally, we plan to explore frameworks to add guarantees for long-term fairness to the system, such as karma economies \cite{elokda_self-contained_2024, berriaud_learning_2023} or external regulators. 


\bibliographystyle{IEEEtran}
\bibliography{reference}
\end{document}